\newtheorem{lem}{Lemma}
\newtheorem{ass}{Assumption}
\newtheorem{thm}{Theorem}
\newtheorem{rem}{Remark}
\newtheorem{cor}{Corollary}
\def\mb{\mathbf}
\def\mc{\mathcal}
\DeclareMathOperator*{\argmax}{argmax}
\begin{document}
\title{\huge \bf  Consensus-Based Distributed Estimation in the presence of Heterogeneous, Time-Invariant Delays
}
\author{Mohammadreza  Doostmohammadian, \IEEEmembership{Member, IEEE},    Usman~A.~Khan, \IEEEmembership{Senior Member, IEEE}, Mohammad~Pirani, \IEEEmembership{Member, IEEE}, and Themistoklis Charalambous, \IEEEmembership{Senior Member, IEEE}	
\thanks{M.  Doostmohammadian and T. Charalambous are with the School of Electrical Engineering at Aalto University, Finland (name.surname@aalto.fi). M.  Doostmohammadian is also with the Faculty of Mechanical Engineering at Semnan University, Iran (doost@semnan.ac.ir).}
\thanks{{U. A. Khan} is with the Electrical and Computer Engineering Department at Tufts University, MA, USA (khan@ece.tufts.edu).} 
\thanks{{M. Pirani} is with the Department of Electrical and Computer Engineering, University of Waterloo, Canada (mpirani@uwaterloo.ca).
}}
\maketitle
\thispagestyle{empty} 
	
\begin{abstract}
	Classical distributed estimation scenarios typically assume timely and reliable exchanges of information over the sensor network. 
	This paper, in contrast, considers single time-scale distributed estimation via a sensor network subject to transmission time-delays. The proposed discrete-time networked estimator consists of two steps: (i) consensus on (delayed) a-priori estimates, and (ii) measurement update.  The sensors only share their a-priori estimates with their  out-neighbors  over (possibly) time-delayed transmission links. The delays are assumed to be fixed over time, heterogeneous, and known. We assume distributed observability instead of local observability, which significantly reduces the communication/sensing loads on sensors. Using the notions of augmented matrices and Kronecker product, the convergence of the proposed estimator over strongly-connected networks is proved for a specific upper-bound on the time-delay. 

\keywords  Distributed estimation, consensus, Kronecker product, communication time-delays 	
\end{abstract}


\section{Introduction} \label{sec_intro}
\IEEEPARstart{L}{atency} in data transmission networks may significantly affect the performance of decision-making over sensor networks and multi-agent systems \cite{HOFBAUER}. In particular, time-delays may cause instability in networked control systems which are originally stable in the corresponding delay-free case. For example, the consequence of communication delays on the consensus stability  are discussed in \cite{Themis_delay,sharifi2020finite,ghaedsharaf2021centrality} among others, 
and centralized observer design 
are  discussed in \cite{sundaram2007delayed,schenato2006optimal}. This work extends to \textit{distributed} estimation over a sensor network with random communication time-delays.

The literature on distributed estimation spans from multi time-scale scenarios to single time-scale methods.  
The former case  requires many iterations of averaging/data-sharing (consensus/communication time-scale) between two consecutive system time-steps (system time-scale) \cite{he2019secure,battilotti2021stability}, where the estimation performance tightly depends on the number of consensus iterations. This  is less efficient in terms of computational and communication loads on sensors and, further, requires much faster data sharing/processing rate which might be inaccessible over large networks. In terms of observability, in the multi time-scale method, number of communication/consensus iterations is greater than the network diameter, and therefore, all sensors eventually gain all state information (and system observability)  between every two system time-steps. In the single time-scale, however, every sensor performs only one iteration of consensus, and therefore, many works require the system to be \textit{locally} observable in the neighborhood of the sensors \cite{kar2014distributed,sayed-kf,das2016consensus,mohammadi2015distributed,mo2020distributed,jenabzadeh2020distributed,zou2019moving,liu2017distributed}; in this work, we assume \textit{global} observability as in \cite{jstsp,silm2020simple,mitra2018distributed,icassp13}. Recall that local observability mandates: (i) more network connectivity, and/or (ii)  access to more system outputs at each sensor, and may considerably increase the communication/sensing-related costs \cite{pequito2014optimal,spl18,lcss2020}. This work, however, considers least connectivity requirement   (strong-connectivity) and least outputs at each sensor (one output), while addressing transmission delays.

The networked estimator in this paper is single time-scale, where sensors perform one consensus iteration on (possibly) \textit{delayed} a-priori estimates in their in-neighborhood, and then,  measurement-update using their own outputs.
As in\cite{Themis_delay}, we consider  arbitrary time-delays  at every communication link, but the delays are time-invariant and known. The delays are bounded so that no information is lost over the network and the data would eventually reach the recipient sensor.  
To avoid considering a trivial case, this work makes no assumption on the stability of the  linear system. Further, similar to \cite{sayed-kf,mohammadi2015distributed}, we assume that the system is full-rank as in structurally-cyclic/self-damped systems \cite{chapman2013strong,tnse_selfdamp}. We adopt the notions of augmented representation \cite{Themis_delay} and the Kronecker network product \cite{tsipn_kron} to simplify the convergence analysis. We show that feedback gain design in the absence of delays via the Linear-Matrix-Inequality (LMI) in \cite{jstsp,usman_cdc:11} also results in  stable  estimation  for  some upper-bounded delayed cases. Further, we provide a solution to design delay-tolerant networked estimators for a given bound on the delays. Therefore, the gain design requires no information other than the bound  on the delays and, the LMI complexity is determined by the original low-order system and not the high-order augmented one. Note that in this work, we assume \textit{no measurement delays}; this is because the sensors take direct measurements, while spatially distributed in large-scale with (possibly) delayed communications. Further, as proved in \cite{schenato2006optimal}, stability  depends \textit{only} on the measurement \textit{packet loss}, not the \textit{packet delays}.

\textit{Paper organization:} Section~\ref{sec_fram} provides some preliminaries and problem statement. Section~\ref{sec_pro} states our main results on delay-tolerant distributed estimation. Section~\ref{sec_sim} provides the simulations, and Section~\ref{sec_conc} concludes the paper.


\section{The Framework} \label{sec_fram}
\subsection{System-Output Model} \label{sec_sys} 
We consider discrete-time system and measurements as,
\begin{align}\label{eq_sys}
	\mb{x}_{k} &= A \mb{x}_{k-1} + \mb{\nu}_{k},~k\geq 1\\
	\mb{y}_{k} &= C \mb{x}_{k} + \mb{\zeta}_k, \label{eq_y}
\end{align}
with $\mb{x}_k \in \mathbb{R}^n$ as system states, $\mb{y}_k \in \mathbb{R}^N$ as system outputs, $\mb{\zeta}_k\sim\mc{N}(0,R)$ and $ \mb{\nu}_k \sim \mc{N}(0,Q)$ as independent  noise variables, all at time-step $k$. It is \textit{not} assumed that $\rho(A)<1$ (potentially unstable system), while $\det (A)\neq 0$ (full-rank), 
 where $\rho(\cdot)$ and $\det(\cdot)$  are the spectral radius and determinant, respectively. Examples of such full-rank systems are structurally cyclic \cite{tnse_selfdamp} and \textit{self-damped}\footnote{A linear system  is self-damped if its matrix $A$ has non-zero diagonal entries \cite{tnse_selfdamp,chapman2013strong}. Similarly, network $\mc{G}=\{\mc{V},\mc{E}\}$ is self-damped if for every node $i \in \mc{V}$ we have $(i,i) \in \mc{E}$, i.e., there is a self-link at every node.} systems \cite{tnse_selfdamp,chapman2013strong} among others, prevalent in, e.g., social opinion dynamics \cite{ghaderi2014opinion}.  
In this work, without loss of generality, we assume $N$ sensors each with one output ${y}_{k}^i$. 

\begin{ass} \label{ass_AC}
	Every sensor knows the system matrix $A$. The pair $(A,C)$ is observable (similar arguments for detectable case), implying \textit{global} observability. However, in general, the pair $(A,  C_i)$ is not necessarily observable at any sensor $i$. 
\end{ass}

Note that the output matrix $C$ can be defined via the graph-theoretic methods  to ensure (structural)  $(A,C)$-observability. From \cite{jstsp,pequito2015framework,lcss2020}, having one output from (at least) one state associated with every irreducible block of the adjacency matrix $A$ ensures structural observability. The optimal output selection strategies are also of interest as in \cite{ru2010sensor,tnse_selfdamp,pirani2021strategic}. 

\subsection{Preliminaries on Consensus Algorithms} \label{sec_pre}
Consider discrete-time consensus algorithms over a network of sensors $\mc{G}=\{\mc{V},\mc{E}\}$ ($\mc{V}$ as the node set and $\mc{E}$ as the link set) with $\mb{z}_{k}$ as the state of  sensors at time $k$, which evolves as $\mb{z}_{k} = P \mb{z}_{k-1}$. Matrix $P$ (as the consensus weight) represents the communication between the sensors via graph $\mathcal{G}$. 
The sensor network is in general directed. For notation simplicity denote $P(i,j)$ by $p_{ij}$, where $0<p_{ij}<1$ if $(j,i) \in \mc{E}$ and $0$ otherwise. $P$  is \textit{row-stochastic}, i.e., $\sum_{j=1}^N p_{ij} = 1$, and $p_{ii} \neq 0$ for all $i$. Further,  the network $\mc{G}$ needs to be (at least) \textit{strongly-connected} (SC), i.e., there is a path from every node $i$ to every node $j$ in $\mathcal{V}$, implying  that fusion matrix $P$ is \textit{irreducible}, also called stochastic, indecomposable, and aperiodic (SIA) and  $\lim_{k \rightarrow \infty} P^k = \mb{d}\mb{1}_N^\top 
$ \cite{Themis_delay}, with $\mathbf{1}_N$ as all-ones vector of size $N$. For such SIA matrix,  $\rho(P)=1$. 
 
\subsection{Delay Model} \label{sec_delay}
In this work, it is assumed that the data-transmission over the link $(j,i)$ from sensor $j$ to sensor $i$  has a-priori unknown bounded (integer)  time-delay, $\tau_{ij}$, where $0\leq \tau_{ij} \leq \overline{\tau} <\infty$, and $\overline{\tau}$ is an upper bound to the delays in all links. The messages are \textit{time-stamped}, so the recipient knows the time-step the data was sent.
Further, $\tau_{ii}=0$, i.e., every sensor $i$ knows its own state with no delay. To model the delayed state vectors we adopt the notations in \cite{Themis_delay}. In a network of $N$ sensors, define an \textit{augmented state} vector $\underline{\mb{x}}_k = \left(\mb{x}_k; \mb{x}_{k-1}; \dots; \mb{x}_{k-\overline{\tau}} \right)$  with '$;$' as column concatenation, and $\mb{x}_{k-r} = \left({x}^1_{k-r};  \dots; {x}^n_{k-r} \right)$ for $0\leq r \leq \overline{\tau}$. Then, for a given $N$-by-$N$ matrix $P$ and maximum delay $\overline{\tau}$, define the \textit{augmented matrix} $\overline{P}$ as,
 \begin{align} \label{eq_aug_W} \small
	\overline{P}= \left( 
	\begin{array}{cccccc}
		P_0 & P_1 & P_{2} & \hdots & P_{\overline{\tau}-1} & P_{\overline{\tau}} \\
		I_N &   0_N & 0_N &\hdots  & 0_N& 0_N\\
		0_N & I_N & 0_N &  \hdots  & 0_N & 0_N  \\
		0_N &  0_N & I_N  &  \hdots  & 0_N & 0_N  \\
		\vdots & \vdots & \vdots & \ddots & \vdots & \vdots \\
		0_N & 0_N & 0_N & \hdots & I_N & 0_N
	\end{array}	
	\right), \normalsize
\end{align} 
with $I_{N}$ and $0_{N}$ as the identity and zero matrix of size $N$. The non-negative matrices $P_r$ are defined based on the time delay $0\leq r \leq \overline{\tau} $ on the network links as follows,
\begin{align}
	P_r(i,j) = \left\{
	\begin{array}{ll}
		p_{ij}, & \text{If}~ \tau_{ij}=r  \\
		0, & \text{Otherwise}.
	\end{array}\right.
\end{align}
Assuming fixed delays, for any $(j,i) \in \mc{E}$ in the given communication network $\mc{G}$, \textit{only one of $P_0(i,j),P_1(i,j), \hdots, P_{\overline{\tau}}(i,j)$ is  equal to  $p_{ij}$  and the rest are zero.} This implies that the row-sum of each of the first $N$ rows of $\overline{P}$ and $P$ are equal, i.e., $\sum_{j=1}^{N(\overline{\tau}+1)} \overline{P}(i,j) =  \sum_{j=1}^{N} {P}(i,j)$ for $1\leq i \leq N$ and $P = \sum_{r=0}^{\overline{\tau}} {P}_r$ for $k \geq 0$. \textit{Therefore, in case of having a row-stochastic matrix $P$, the augmented matrix $\overline{P}$ is also row-stochastic\footnote{Note the subtle difference between our notation vs. Ref. \cite{Themis_delay}. In \cite{Themis_delay} \textit{column} augmented matrix is introduced, while we consider \textit{row} augmented matrices.}.} In the proposed estimator, we do not need the matrix $\overline{P}$ and it is only defined to simplify the mathematical analysis. Similar to \cite{Themis_delay}, the knowledge of the probability distribution of $\tau_{ij}$s is not needed for the analysis in this paper.  
\begin{ass} \label{ass_tau}
	 For the time-delay $\tau_{ij}$ on  link $(j,i)$:
	 \begin{enumerate} [i)]
	 	\item 
	 	The delay is known and bounded $\tau_{ij} \leq \overline{\tau}$. 
	 	The upperbound $\overline{\tau}$ guarantees no lost information, i.e., the data sent from sensor $j$ at time
	 	$k$ would eventually reach the recipient sensor $i$ (at most) at time $k+\overline{\tau}$ (a $\overline{\tau}+1$-slot transmission buffer).
	 	\item Delay $\tau_{ij}$ is arbitrary, fixed, and  may/may-not differ for  $(j,i)$ links (heterogeneous/homogeneous delays). 
	 \end{enumerate}
\end{ass}

\subsection{Problem Statement}
The problem in this work is to design a networked estimator for the system-output model \eqref{eq_sys}-\eqref{eq_y} satisfying Assumption \ref{ass_AC},
where every sensor relies only on its partial system output (partial observability) and the received  (possibly delayed satisfying Assumption~\ref{ass_tau}) information from its in-neighbors. This work particularly differs from  \cite{kar2014distributed,sayed-kf,das2016consensus,mohammadi2015distributed,mo2020distributed,jenabzadeh2020distributed,zou2019moving,liu2017distributed} via the following remark.    
\begin{rem} \label{rem_ACj}
Let $\mc{N}_i=\{j|(j,i) \in \mc{E}\}$ denote the set of in-neighborhood of  sensor $i$ over the network $\mc{G}$. The pair $(A, \sum_{j \in \mc{N}_i} C_j)$ is not necessarily observable at any sensor $i$, implying no \textit{local} observability assumption. 
\end{rem}

\section{Distributed Estimation in Presence of Delays} \label{sec_pro}
Every sensor $i$ performs the following two steps for distributed state estimation in the presence of time-delays, 
\small
\begin{align}\label{eq_p} 
	\widehat{\mb{x}}^i_{k|k-1} =& p_{ii}A\widehat{\mb{x}}^i_{k-1|k-1} + \sum_{j\in\mathcal{N}_i} \sum_{r=0}^{\overline{\tau}} p_{ij}A^{r+1}\widehat{\mb{x}}^j_{k-r|k-r} \mb{I}_{k-r,ij}(r),
	\\\label{eq_m}
	\widehat{\mb{x}}^i_{k|k} =& \widehat{\mb{x}}^i_{k|k-1} + K_i C_i^\top \left({y}^i_k-C_i\widehat{\mb{x}}^i_{k|k-1}\right),
\end{align} \normalsize
where  $\mb{I}_{k,ij}(r)$ is the indicator function  defined as \cite{Themis_delay},
\begin{align}
	\mb{I}_{k,ij}(r) = \left\{
	\begin{array}{ll}
		1, & \text{if}~ \tau_{ij}=r  \\
		0, & \text{otherwise}.
	\end{array}\right.
\end{align}
In \eqref{eq_p}, $\widehat{\mb{x}}^i_{k|k-1}$ denotes the sensor $i$'s \textit{a-priori} state estimate at time $k$ given all the (possibly delayed) information up to time $k-1$ from its in-neighbors $\mathcal{N}_i$. Step \eqref{eq_p} represents one iteration of consensus-based information-fusion on all the received information, where sensor $i$ sums the weighted estimates
of sensors $j\in \mc{N}_i$ as they arrive  knowing the delays. The performance analysis of \eqref{eq_p}-\eqref{eq_m} in terms of mean-square stability (for the delay-free case) is given in \cite{jstsp}. Recall that, for every link $(i,j)$ the indicator $\mb{I}_{k-r,ij}(r)$ is only non-zero for one $r$ between $0$ and $\overline{\tau}$ (due to fixed delay assumption). The second step \eqref{eq_m} is a measurement-update (also known as \textit{innovation}) to modify the a-priori estimate based on the new measurement of sensor $i$. Clearly, the  protocol \eqref{eq_p}-\eqref{eq_m} is single time-scale with one step of information-sharing/consensus-update between every $k-1$ and $k$. 
Using the notion of augmented vector, define $\underline{\widehat{\mb{x}}}_{k|k-1} = \left(\widehat{\mb{x}}_{k|k-1}; \widehat{\mb{x}}_{k-1|k-2}; \dots; \widehat{\mb{x}}_{k-\overline{\tau}|k-\overline{\tau}-1} \right)$ and similarly $\underline{\widehat{\mb{x}}}_{k|k}$. Then, the augmented version of \eqref{eq_p}-\eqref{eq_m} is,
\begin{align}\label{eq_p_aug}
	\underline{\widehat{\mb{x}}}_{k|k-1} =&\overline{PA} \underline{\widehat{\mb{x}}}_{k-1|k-1},
	\\\label{eq_m_aug}
	\underline{\widehat{\mb{x}}}_{k|k} =&  \underline{\widehat{\mb{x}}}_{k|k-1} + \mb{b}^{\overline{\tau}+1}_1 \otimes K D_C^\top \left(\mb{y}_k-D_C\Xi^{Nn}_{1,\overline{\tau}}\underline{\widehat{\mb{x}}}_{k|k-1}\right),
\end{align}
where 
$\overline{PA}$ is the \textit{modified} augmented version of $P \otimes A$ as,
\small \begin{align} \label{eq_aug_WA}
	\overline{PA}= \left( 
	\begin{array}{cccccc}
		P_0 \otimes A & P_1\otimes A^2  &  \hdots & P_{\overline{\tau}-1}\otimes A^{\overline{\tau}}  & P_{\overline{\tau}} \otimes A^{\overline{\tau}+1}  \\
		I_{Nn} &   0_{Nn}  &\hdots  & 0_{Nn}& 0_{Nn}\\
		0_{Nn} & I_{Nn} &   \hdots  & 0_{Nn} & 0_{Nn}  \\
		0_{Nn} &  0_{Nn} &  \ddots  & 0_{Nn} & 0_{Nn}  \\
		\vdots & \vdots &  \hdots & \vdots & \vdots \\
		0_{Nn} & 0_{Nn} &  \hdots & I_{Nn} & 0_{Nn}
	\end{array}	
	\right),
\end{align} \normalsize
and
${D}_C=\mbox{blockdiag}(C_i)$, $K = \mbox{blockdiag}(K_i)$, and the  auxiliary matrix $\Xi^m_{i,\overline{\tau}}$ is an $m \times (\overline{\tau}+1)m $ matrix defined as
$\Xi^m_{i,\overline{\tau}}= (\mb{b}^{\overline{\tau}+1}_i \otimes I_m)^\top $
with $\mb{b}^{\overline{\tau}+1}_i$ as the unit column-vector of the $i$'th coordinate ($1\leq i \leq {\overline{\tau}+1}$). 
\vspace{\belowdisplayskip}
\par\noindent\rule{\dimexpr(0.5\textwidth-0.5\columnsep-0.4pt)}{0.4pt}%
\rule{0.4pt}{6pt}
\begin{strip}
Define the augmented state  $\underline{\mb{x}}_{k} = \left(1_{N} \otimes   \mb{x}_k; 1_{N} \otimes \mb{x}_{k-1}; \dots; 1_{N} \otimes \mb{x}_{k-\overline{\tau}} \right) $ and the augmented error $\underline{\mb{e}}_{k}$ at time $k$ as follows,
\begin{align} 
	\underline{\mb{e}}_{k}  &= \underline{\mb{x}}_{k} - \underline{\widehat{\mb{x}}}_{k|k}  =\underline{\mb{x}}_{k} - \Bigl(\underline{\widehat{\mb{x}}}_{k|k-1} + \mb{b}^{\overline{\tau}+1}_1 \otimes K D_C^\top \mb{y}_k-D_C\Xi^{Nn}_{1,\overline{\tau}}\underline{\widehat{\mb{x}}}_{k|k-1}\Bigr) \nonumber
	\\
	&= \underline{\mb{x}}_{k}  - \overline{PA}\underline{\widehat{\mb{x}}}_{k-1|k-1} - \mb{b}^{\overline{\tau}+1}_1 \otimes K D_C^\top \left(\mb{y}_k-D_C\Xi^{Nn}_{1,\overline{\tau}}\overline{PA} \underline{\widehat{\mb{x}}}_{k-1|k-1}\right). \label{eq_love}
\end{align}

Define $\widetilde{\mb{\nu}}_{k} =  1_{N} \otimes \mb{\nu}_{k} $
and $\underline{\mb{\nu}}_{k} =\mb{b}^{\overline{\tau}+1}_i \otimes\widetilde{\mb{\nu}}_{k}$. 
We have $D_C^\top \mb{y} =D_C^\top D_C(1_{N} \otimes\mb{x}_k)+  D_C^\top \mb{\zeta} $. 
Recall that the row-stochasticity of $P$ and $\overline{P}$ along with the system dynamics~\eqref{eq_sys} implies that $\underline{\mb{x}}_{k}   = \overline{PA} \underline{\mb{x}}_{k-1} +  \underline{\mb{\nu}}_{k}  $; substituting this  along with~\eqref{eq_sys}-\eqref{eq_y} into \eqref{eq_love},
\begin{align}
	\underline{\mb{e}}_{k} &= \overline{PA} \underline{\mb{x}}_{k-1} + \underline{\mb{\nu}}_{k} -  \overline{PA}\underline{\widehat{\mb{x}}}_{k-1|k-1} -  \mb{b}^{\overline{\tau}+1}_1 \otimes K D_C^\top \left(D_C(1_{N} \otimes\mb{x}_k)+  \mb{\zeta}_{k}-D_C\Xi^{Nn}_{1,\overline{\tau}}\overline{PA}\underline{\widehat{\mb{x}}}_{k-1|k-1}\right) \nonumber \\
	&= \overline{PA} \underline{\mb{e}}_{k-1} 
	- \mb{b}^{\overline{\tau}+1}_1 \otimes K \overline{D}_C \Bigl(1_{N} \otimes\mb{x}_{k}  
	- \Xi^{Nn}_{1,\overline{\tau}}\overline{PA}\underline{\widehat{\mb{x}}}_{k-1|k-1}\Bigr) + \underline{\mb{\eta}}_k \nonumber  \\
	&= \overline{PA} \underline{\mb{e}}_{k-1} 
	- \mb{b}^{\overline{\tau}+1}_1 \otimes K \overline{D}_C \Bigl(\Xi^{Nn}_{1,\overline{\tau}}\overline{PA}\underline{\mb{x}}_{k-1}  
	- \Xi^{Nn}_{1,\overline{\tau}}\overline{PA}\underline{\widehat{\mb{x}}}_{k-1|k-1}\Bigr) + \underline{\mb{\eta}}_k \nonumber  \\
	&= \overline{PA} \underline{\mb{e}}_{k-1} 
	- \mb{b}^{\overline{\tau}+1}_1 \otimes K \overline{D}_C \Xi^{Nn}_{1,\overline{\tau}}\overline{PA}\underline{\mb{e}}_{k-1}  + \underline{\mb{\eta}}_k   = \underline{\widehat{A}} \underline{\mb{e}}_{k-1}  + \underline{\mb{\eta}}_k, \label{eq_err1}
\end{align}
where $\underline{\widehat{A}}= \overline{PA}
- \mb{b}^{\overline{\tau}+1}_1 \otimes K \overline{D}_C \Xi^{Nn}_{1,\overline{\tau}}\overline{PA}$ is the \textit{closed-loop matrix}, $\overline{D}_C= D_C^\top D_C$, and $\underline{\mb{\eta}}_k$ collects the noise terms as,
\begin{align}
	\underline{\mb{\eta}}_k &= \underline{\mb{\nu}}_{k} - \mb{b}^{\overline{\tau}+1}_1 \otimes K \overline{D}_C\widetilde{\mb{\nu}}_{k} -\mb{b}^{\overline{\tau}+1}_1\otimes  K {D}_C^\top\mb{\zeta}_{k} =  \mb{b}^{\overline{\tau}+1}_1 \otimes (\widetilde{\mb{\nu}}_{k} - K \overline{D}_C\widetilde{\mb{\nu}}_{k} - K {D}_C^\top\mb{\zeta}_{k}).
	\label{eq_eta}
\end{align}
\end{strip}
\hfill\rule[-6pt]{0.4pt}{6.4pt}%
\rule{\dimexpr(0.5\textwidth-0.5\columnsep-1pt)}{0.4pt}

The error dynamics in the absence of any delay is as follows,
\begin{align} \nonumber
	\mb{e}_{k} &= (P\otimes A - K \overline{D}_C (P\otimes A))\mb{e}_{k-1} +
	\mb{\eta}_k \\
	&= \widehat{A} \mb{e}_{k-1} + 	\mb{\eta}_k 
	\label{eq_err_nodelay}
\end{align}
where  $\mb{\eta}_k$ follows  the formulation \eqref{eq_eta} with $\overline{\tau}=0$ and $\widehat{A}=P\otimes A - K \overline{D}_C (P\otimes A)$ is the \textit{ delay-free closed-loop matrix}. 
For the Schur stability of the error dynamics \eqref{eq_err1} and \eqref{eq_err_nodelay}, we need $\rho(\underline{\widehat{A}})<1$ and $\rho(\widehat{A})<1$, respectively. We first discuss the condition for Schur stability of $\widehat{A}$ and then extend the results to Schur stability of $\underline{\widehat{A}}$. Following Kalman theorem  and  justification in \cite{jstsp,usman_cdc:11}, for stability of \eqref{eq_err_nodelay} the pair $(P\otimes A, \overline{D}_C)$ needs to be observable (or detectable); this is known as \textit{distributed observability} \cite{jstsp}, discussed next. 

\begin{lem} \label{lem_irreducible}
	Given a full-rank matrix $A$ and output matrix $C$, following Assumption~\ref{ass_AC} and Remark~\ref{rem_ACj}, the pair $(P\otimes A, \overline{D}_C) $ is (structurally) observable if the matrix $P$ is irreducible.
\end{lem}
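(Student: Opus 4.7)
My plan is to prove structural observability of $(P\otimes A,\overline{D}_C)$ through the graph-theoretic characterization, exploiting the Kronecker structure. I begin with a reduction: for any $v\in\mathbb{R}^{Nn}$ one has $\overline{D}_C v=D_C^\top D_C v=0$ iff $\|D_C v\|^2=0$ iff $D_C v=0$, so the unobservable subspaces of $(P\otimes A,\overline{D}_C)$ and $(P\otimes A,D_C)$ coincide and it suffices to work with the latter. I will then invoke the standard equivalence: a pair is structurally observable iff its system digraph is (i) \emph{output-connected}, so that every state vertex has a directed path to some output vertex, and (ii) \emph{contraction-free}, equivalently the matrix obtained by stacking $D_C$ below $P\otimes A$ has full generic column rank.

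The system digraph of $P\otimes A$ has vertices $\{(i,\ell):i\in\{1,\dots,N\},\ell\in\{1,\dots,n\}\}$ and an edge $(j,m)\to(i,\ell)$ whenever $P_{ij}\neq 0$ and $A_{\ell m}\neq 0$; the output at sensor $i$ is attached to each vertex $(i,\ell)$ with $\ell\in\mathrm{supp}(C_i)$. For condition (ii), structural full rank of $A$ produces a disjoint cycle cover of the $A$-digraph spanning $\{1,\dots,n\}$, and the condition $p_{ii}\neq 0$ gives the trivial self-loop cover of the $P$-digraph. Pairing each self-loop at sensor $i$ with each cycle $m_0\to m_1\to\cdots\to m_{q-1}\to m_0$ of the $A$-cover lifts to a cycle $(i,m_0)\to(i,m_1)\to\cdots\to(i,m_{q-1})\to(i,m_0)$ in the Kronecker product digraph, and ranging over all $i$ and all $A$-cycles produces a disjoint cycle cover of all $Nn$ vertices. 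Hence $P\otimes A$ is structurally full rank and (ii) holds.

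For condition (i), I fix a vertex $(j,m)$ and construct a path to an output. Global $(A,C)$-observability from Assumption~\ref{ass_AC} yields a path $m=m_0\to m_1\to\cdots\to m_q=\ell^{*}$ in the $A$-digraph with $\ell^{*}\in\mathrm{supp}(C_{i^{*}})$ for some sensor $i^{*}$. The self-loop $p_{jj}\neq 0$ lifts each $A$-step to an edge $(j,m_s)\to(j,m_{s+1})$ in $P\otimes A$, so $(j,m)$ reaches $(j,\ell^{*})$. If $j=i^{*}$ this is already an output vertex; otherwise, strong connectivity of $P$ supplies a directed path $j=j_0\to j_1\to\cdots\to j_p=i^{*}$, which I lift by threading it through the $A$-cycle $\Gamma$ of length $|\Gamma|$ containing $\ell^{*}$ in the $A$-cover, padding the $P$-path with self-loops at $j$ so that its total length is a multiple of $|\Gamma|$; the lifted walk then ends at $(i^{*},\ell^{*})$, which is observed. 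The main obstacle is precisely this coordinated lifting, since every edge of $P\otimes A$ forces simultaneous motion in both coordinates --- the length-matching argument hinges on the availability of $P$-self-loops guaranteed by $p_{ii}\neq 0$, and without structural full rank of $A$ the cycle $\Gamma$ through $\ell^{*}$ need not exist.
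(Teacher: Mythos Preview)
Your argument is correct and essentially reconstructs, from first principles, the Kronecker-product observability result that the paper simply cites. The paper's own proof is a two-line appeal to Theorem~4 of \cite{tsipn_kron}: given $(A,C)$-observability and full-rank $A$, the Kronecker composite $(P\otimes A,\overline{D}_C)$ is structurally observable whenever the graph of $P$ is strongly connected and self-damped; the authors then note that $p_{ii}\neq 0$ gives self-damping and irreducibility of $P$ gives strong connectivity.

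You take a genuinely different route by proving that cited theorem directly via the graph-theoretic characterization of structural observability. Your reduction from $\overline{D}_C$ to $D_C$ is a clean observation the paper leaves implicit. Your verification of the full-rank (contraction-free) condition via the product of the self-loop cover of $P$ with a cycle cover of $A$ is exactly the right construction, and your output-connectivity argument---lifting an $A$-path using $P$-self-loops, then threading a $P$-path through an $A$-cycle with length padding---is the standard way to build walks in a Kronecker product when one factor is self-damped. What your approach buys is a self-contained proof that makes explicit where each hypothesis is used (full rank of $A$ for the cycle cover, $p_{ii}\neq 0$ for both the lift and the padding, irreducibility for the $P$-path), whereas the paper's approach buys brevity at the cost of an external dependency. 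Both rely on the same two structural properties of $P$ and arrive at the same conclusion.
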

\begin{proof}
	The proof follows the results in \cite{tsipn_kron} on the (structural) observability of  composite Kronecker-product networks. Given a system digraph $\mc{G}_1$ associated with full-rank system $A$ and measurement matrix $C$ satisfying $(A,C)$-observability, the minimum sufficient condition for observability of the Kronecker-product network (denoted by $\mc{G} \times \mc{G}_1$) is that $\mc{G}$ be strongly-connected and self-damped (see Theorem 4 in \cite{tsipn_kron}). Following the definition of consensus matrix $P$, we have $p_{ii} \neq 0$ (satisfying the self-damped condition). The strong-connectivity of the  sensor network $\mc{G}$ is equivalent with irreducibility of matrix $P$, which completes the proof. 
\end{proof}

\begin{cor}\label{cor_rho}
For observable $(P\otimes A, \overline{D}_C) $, the gain matrix $K$ can be designed such that  $\rho(\widehat{A})<1$. 
\end{cor}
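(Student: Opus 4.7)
My plan is to recast Corollary~\ref{cor_rho} as a standard Luenberger-observer design. Writing $\mathcal{A} := P\otimes A$, the closed-loop matrix factors as $\widehat{A} = (I - K\overline{D}_C)\mathcal{A} = \mathcal{A} - K(\overline{D}_C\mathcal{A})$, which is precisely the form of an observer closed-loop with effective output matrix $\widetilde{\mathcal{C}} := \overline{D}_C\mathcal{A}$ and observer gain $K$. By discrete-time Kalman theory, or equivalently by solving the Lyapunov LMI already used in \cite{jstsp,usman_cdc:11}, a stabilizing $K$ exists provided $(\mathcal{A}, \widetilde{\mathcal{C}})$ is detectable. So the task reduces to transferring what Lemma~\ref{lem_irreducible} already gives, observability of $(\mathcal{A}, \overline{D}_C)$, to detectability of $(\mathcal{A}, \overline{D}_C\mathcal{A})$.

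The two observability matrices are related by
\begin{align*}
\mathcal{O}\bigl(\mathcal{A},\, \overline{D}_C\mathcal{A}\bigr) \;=\; \mathcal{O}\bigl(\mathcal{A},\, \overline{D}_C\bigr)\,\mathcal{A},
\end{align*}
so whenever $\mathcal{A}$ is nonsingular the two pairs are simultaneously observable; since $\det(A)\neq 0$ by assumption, this reduces to invertibility of $P$, which can always be arranged by choosing the consensus weights with sufficiently large self-entries. To avoid this extra requirement on $P$, I would fall back on a PBH argument: for any eigenvalue $\lambda$ of $\mathcal{A}$ with eigenvector $v$, one has $\widetilde{\mathcal{C}}v = \lambda\,\overline{D}_C v$, so unobservable directions of $(\mathcal{A}, \widetilde{\mathcal{C}})$ are either unobservable for $(\mathcal{A}, \overline{D}_C)$, which Lemma~\ref{lem_irreducible} excludes, or live in $\ker\mathcal{A}$ and hence correspond to the stable mode $\lambda=0$. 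Either way, detectability of $(\mathcal{A}, \widetilde{\mathcal{C}})$ holds.

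Having detectability, the last step is to actually produce $K$. I would start from the discrete-time Lyapunov inequality $\widehat{A}^\top X \widehat{A} - X \prec 0$ with $X\succ 0$, apply a Schur complement, and use the change of variables $Y=XK$ to arrive at the LMI formulation of \cite{jstsp,usman_cdc:11}; any feasible $(X,Y)$ then yields the stabilizing gain $K=X^{-1}Y$. The main obstacle, and the reason this is not a one-line invocation of Kalman's theorem, is precisely the nonstandard factored form $\mathcal{A} - K\overline{D}_C\mathcal{A}$: observability of $(\mathcal{A}, \overline{D}_C)$ does not automatically yield observability of $(\mathcal{A}, \overline{D}_C\mathcal{A})$ when $\mathcal{A}$ is singular, and the PBH detectability argument above is what bridges that gap.
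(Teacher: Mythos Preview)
Your proposal is correct and in fact more careful than the paper's own treatment. The paper does not give a proof of Corollary~\ref{cor_rho}; it simply invokes ``Kalman theorem and justification in \cite{jstsp,usman_cdc:11}'' in the paragraph preceding the corollary and then points to the LMI~\eqref{eq_succ} in Section~III-A. You go further by observing that the closed-loop matrix has the factored form $\widehat{A}=\mathcal{A}-K(\overline{D}_C\mathcal{A})$, so that the pair relevant for a direct Kalman/Luenberger argument is $(\mathcal{A},\overline{D}_C\mathcal{A})$ rather than $(\mathcal{A},\overline{D}_C)$, and you supply the PBH step showing that observability of the latter yields detectability of the former (any unobservable mode of $(\mathcal{A},\overline{D}_C\mathcal{A})$ is either unobservable for $(\mathcal{A},\overline{D}_C)$ or has eigenvalue~$0$). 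This is a gap the paper leaves entirely to its references; your route makes the corollary self-contained.

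One minor remark: in the paper's context $K=\mbox{blockdiag}(K_i)$, and with that structural constraint the substitution $Y=XK$ you propose no longer linearizes the LMI (it becomes bilinear unless $X$ is also block-diagonal). The paper acknowledges this and defers to the cone-complementarity algorithms of \cite{jstsp,usman_cdc:11}. Since the corollary as stated does not explicitly impose block-diagonality, your argument for a full $K$ is sufficient; but if you intend the distributed version, you should flag that the final LMI step requires the iterative procedure from those references rather than a single convex solve.
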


\subsection{Constrained Feedback Gain Design}
It is known that for an observable pair $(P \otimes A, \overline{D}_C)$, the feedback gain ${K}$ can be designed to ensure Schur stability of the error dynamics~\eqref{eq_err_nodelay} (Corollary~\ref{cor_rho}), i.e., $\rho(\widehat{A})<1$. Typically, such $K$ is designed via solving the following LMI,
\begin{eqnarray} \label{eq_succ}
	X - \widehat{A}^\top X  \widehat{A} \succ 0,
\end{eqnarray}
for some $X\succ 0 $ with $\succ$ implying positive-definiteness. The solution of \eqref{eq_succ} is, in general, a \textit{full} matrix. However, for distributed estimation, we need the state feedback to be further \textit{localized}, i.e., the gain matrix ${K}$ needs to be \textit{block-diagonal} so every sensor uses its own state-feedback.  Such a constrained feedback gain design is proposed in \cite{usman_cdc:11,jstsp} based on cone-complementarity LMI algorithms, which are known to be of polynomial-order complexity for application in large scale. 


\subsection{Stability of the Delayed Estimator Dynamics}
Following the Schur stability of the delay-free error dynamics \eqref{eq_err_nodelay}  (via LMI design of  $K$), we extend the results to stability of the delayed dynamics \eqref{eq_err1}, i.e., to get  $\rho(\underline{\widehat{A}})<1$ in the presence of  delays.
\begin{thm} \label{thm_tau*}
	Let conditions in Lemma~\ref{lem_irreducible} hold and the feedback gain $K$ is designed such that $\rho({\widehat{A}})<1$ from Corollary~\ref{cor_rho}. The networked estimator \eqref{eq_p}-\eqref{eq_m}  successfully tracks the system \eqref{eq_sys} (subject to delays satisfying Assumption \ref{ass_tau} and possibly with $\rho({A})>1$) with stable error for any $\overline{\tau}\leq \overline{\tau}^*$, where $\overline{\tau}^* = \argmax_{\overline{\tau}} 
	\{\rho(P\otimes A^{\overline{\tau}+1} - K \overline{D}_C (P\otimes A^{\overline{\tau}+1}) )< 1\}$.
\end{thm}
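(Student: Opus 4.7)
The plan is to exploit the block-companion structure of $\underline{\widehat{A}}$ inherited from $\overline{PA}$, which collapses the $(\overline{\tau}+1)Nn$-dimensional Schur-stability question to an $Nn$-dimensional condition on precisely the Kronecker-product matrix $(I-K\overline{D}_C)(P\otimes A^{\overline{\tau}+1})$ that defines $\overline{\tau}^*$. First I would observe that the correction term $\mb{b}^{\overline{\tau}+1}_1\otimes K\overline{D}_C\,\Xi^{Nn}_{1,\overline{\tau}}\overline{PA}$ is nonzero only in the first block-row: $\mb{b}^{\overline{\tau}+1}_1$ selects the first block-row of the $(\overline{\tau}+1)\times(\overline{\tau}+1)$ grid, and $\Xi^{Nn}_{1,\overline{\tau}}\overline{PA}$ extracts the first block-row of $\overline{PA}$. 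Hence $\underline{\widehat{A}}$ is block-companion-like, with first block-row entries $\widetilde{M}_r:=(I-K\overline{D}_C)(P_r\otimes A^{r+1})$ for $r=0,\ldots,\overline{\tau}$ and the shift structure (subdiagonal blocks $I_{Nn}$, zeros elsewhere) of $\overline{PA}$ preserved.

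Next I would write an eigenvector as $v=(v_0;v_1;\ldots;v_{\overline{\tau}})$ of a nonzero eigenvalue $\lambda$; the subdiagonal identity blocks immediately give the chain $v_r=\lambda^{-r}v_0$, and substituting into the first block-row equation yields the characteristic relation
\begin{equation}
\det\Bigl(\lambda^{\overline{\tau}+1} I_{Nn} - \sum_{r=0}^{\overline{\tau}}\lambda^{\overline{\tau}-r}(I-K\overline{D}_C)(P_r\otimes A^{r+1})\Bigr)=0.
\end{equation}
In the worst-case uniform-delay pattern $\tau_{ij}\equiv\overline{\tau}$, only $P_{\overline{\tau}}=P$ is nonzero, so the relation collapses to $\det(\lambda^{\overline{\tau}+1}I-(I-K\overline{D}_C)(P\otimes A^{\overline{\tau}+1}))=0$; that is, $\lambda^{\overline{\tau}+1}$ is an eigenvalue of the very matrix appearing in the definition of $\overline{\tau}^*$. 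For $\overline{\tau}\leq\overline{\tau}^*$, the spectral radius there is strictly less than one, which forces $|\lambda|^{\overline{\tau}+1}<1$, hence $|\lambda|<1$ and Schur stability of $\underline{\widehat{A}}$.

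The main obstacle is extending this clean argument to a heterogeneous delay pattern where several $P_r$ are simultaneously nonzero, because the matrix-polynomial $\sum_r\lambda^{\overline{\tau}-r}\widetilde{M}_r$ does not combine monotonically in the individual $\widetilde{M}_r$. My proposed route is a domination argument: since $\{P_r\}_{r=0}^{\overline{\tau}}$ partitions the nonzero entries of the row-stochastic $P$ (each $(i,j)\in\mc{E}$ contributes to exactly one $P_r$), one can try to upper-bound $\rho(\underline{\widehat{A}})$ by the "worst concentration" of the delay mass, i.e., by the uniform $\overline{\tau}$-delay spectral radius used to define $\overline{\tau}^*$. Making this comparison rigorous is the delicate step; it requires combining the row-stochasticity of $P$ (hence of $\overline{P}$) with the Kronecker structure of $P_r\otimes A^{r+1}$, either by a Gelfand/Perron-type spectral bound evaluated on $|\lambda|\geq 1$, or by lifting the delay-free Lyapunov certificate from the LMI \eqref{eq_succ} to a Lyapunov--Krasovskii-type certificate on the augmented matrix. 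Once this spectral dominance is established, the uniform-delay bound $\overline{\tau}\leq\overline{\tau}^*$ transfers to arbitrary heterogeneous delays bounded by $\overline{\tau}^*$.
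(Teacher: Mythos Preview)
Your approach is essentially the paper's: the block-companion reduction and the resulting characteristic relation are exactly the content of the paper's Lemmas~\ref{lem_polynom} and~\ref{lem_eig_aug}, and the uniform-delay collapse to $(I-K\overline{D}_C)(P\otimes A^{\overline{\tau}+1})$ is precisely how the paper obtains the bound~\eqref{eq_tau*}. The heterogeneous-delay domination step you flag as delicate is handled in the paper's Lemma~\ref{lem_eig_aug} only by the sentence ``This can be generalized for any choice of bounded time-delay,'' so your proposal is already at the same level of rigor as the paper on that point; the Perron-type or Lyapunov--Krasovskii routes you suggest would in fact go beyond what the paper supplies.
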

\begin{proof}
	For the proof, following from Lemma~\ref{lem_eig_aug} (in the Appendix), we show that $\rho(\underline{\widehat{A}})<1$ for $\overline{\tau}\leq \overline{\tau}^*$, implying Schur stable error dynamics \eqref{eq_err1}. Recall that $\rho({\widehat{A}})<1$ implies that the networked estimator \eqref{eq_p}-\eqref{eq_m}  successfully tracks  system  \eqref{eq_sys} for the delay-free case. From \eqref{eq_aug_WA} and Lemma~\ref{lem_eig_aug}, for the closed-loop matrix $\underline{\widehat{A}}$ (as modified augmented version of ${\widehat{A}}$) and $\rho({A})>1$, 
	\begin{align} \label{eq_tau*}
	    \rho(\underline{\widehat{A}})\leq \rho(P\otimes A^{\overline{\tau}+1} - K \overline{D}_C (P\otimes A^{\overline{\tau}+1}) )^{\frac{1}{\overline{\tau}+1}},
	\end{align}
	which implies that $\rho(\underline{\widehat{A}})<1$ for any $\overline{\tau}\leq \overline{\tau}^*$. In case $\rho({A})<1$, since  $\rho({A})^{\overline{\tau}+1}<\rho({A})$, Schur stability of $\widehat{A}$ also ensures the stability of $\underline{\widehat{A}}$ (for all $\overline{\tau}$). This completes the proof.
\end{proof}	
This theorem gives a sufficient condition for  stable tracking in the presence of  heterogeneous, time-invariant delays $\tau_{ij} = \overline{\tau}$.

\subsection{Convergence Rate}
Note that, in general, the exact characterization of the convergence rate/time of the linear systems is difficult. The following lemma gives the order of convergence time. 
\begin{lem}  \label{lem_rate}
	The convergence time of the error dynamics \eqref{eq_err1} and \eqref{eq_err_nodelay} are  of order\footnote{Given functions $f(\cdot)$ and $g(\cdot)$, say $f = O(g)$ if
$\sup_n |\frac{f(n)}{g(n)}| < \infty$ and $f = \Omega(g)$ if $g = O(f)$. If both $f = O(g)$ and
$f = \Omega(g)$ holds, then $f = \Theta(g)$ \cite{ghaderi2014opinion}.}  $\Theta(\frac{1}{1-\rho(\underline{\widehat{A}})})$ and $\Theta(\frac{1}{1-\rho(\widehat{A})})$, respectively. 
\end{lem}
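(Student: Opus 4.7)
The plan is to analyze the two recursions as linear systems driven by zero-mean noise and to relate the decay of the homogeneous part to $1-\rho(\cdot)$ via a Taylor expansion of the logarithm. I will present the argument for the delayed dynamics \eqref{eq_err1}; the delay-free case \eqref{eq_err_nodelay} is identical after replacing $\underline{\widehat{A}}$ by $\widehat{A}$.

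\emph{Upper bound.} First I would unroll \eqref{eq_err1} to obtain
$\underline{\mb{e}}_k = \underline{\widehat{A}}^k \underline{\mb{e}}_0 + \sum_{t=1}^{k} \underline{\widehat{A}}^{k-t}\underline{\mb{\eta}}_t$.
Because the noise terms $\underline{\mb{\eta}}_t$ are zero-mean with bounded covariance (they are linear combinations of the Gaussian process and measurement noises from \eqref{eq_eta}), and $\rho(\underline{\widehat{A}})<1$ by Theorem~\ref{thm_tau*}, the steady-state covariance of $\underline{\mb{e}}_k$ is finite. The transient toward this steady state is governed by $\underline{\widehat{A}}^k$. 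Using Gelfand's formula, for any $\delta>0$ there is $C_\delta<\infty$ with $\|\underline{\widehat{A}}^k\|\leq C_\delta(\rho(\underline{\widehat{A}})+\delta)^k$, so the distance from the stationary regime decays as $O(\rho(\underline{\widehat{A}})^k)$ up to any prescribed slack. Hence reaching a prescribed tolerance $\varepsilon$ requires
\[
k \;\geq\; \frac{\log(1/\varepsilon)+\log C_\delta}{-\log(\rho(\underline{\widehat{A}})+\delta)}.
\]

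\emph{Lower bound.} For the matching lower bound, I would pick the (generalized) eigenvector $\mb{v}$ of $\underline{\widehat{A}}$ associated with a spectral-radius-attaining eigenvalue $\lambda$ with $|\lambda|=\rho(\underline{\widehat{A}})$, initialize $\underline{\mb{e}}_0=\mb{v}$, and take expectations to kill the noise contribution. Then $\|\mathbb{E}\,\underline{\mb{e}}_k\| = \rho(\underline{\widehat{A}})^k\|\mb{v}\|$, so the convergence time to tolerance $\varepsilon$ is at least $\log(\|\mb{v}\|/\varepsilon)/(-\log\rho(\underline{\widehat{A}}))$, giving the $\Omega$ bound.

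\emph{Converting to $\Theta(1/(1-\rho))$.} Finally I would apply the Taylor expansion $-\log\rho = -\log(1-(1-\rho)) = (1-\rho)+\tfrac{1}{2}(1-\rho)^2+\cdots$, valid for $\rho\in(0,1)$, which yields $-\log\rho = \Theta(1-\rho)$ as $\rho\uparrow 1$. Combining with the two bounds shows that the convergence time is $\Theta\bigl(1/(1-\rho(\underline{\widehat{A}}))\bigr)$, and the same derivation applied to \eqref{eq_err_nodelay} gives $\Theta\bigl(1/(1-\rho(\widehat{A}))\bigr)$.

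The main technical nuisance I anticipate is the possible non-diagonalizability of $\underline{\widehat{A}}$: Jordan blocks would introduce polynomial prefactors $k^{m-1}$ in $\|\underline{\widehat{A}}^k\|$. These are harmlessly absorbed into the $\delta$-slack used above (since $k^{m-1}\rho^k=O((\rho+\delta)^k)$), but they must be tracked carefully in the lower bound, where one should use a true eigenvector (always available for the spectral-radius eigenvalue) rather than a generalized one so that the lower bound remains sharp. Everything else is routine.
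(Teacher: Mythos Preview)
Your argument is correct and essentially complete. The paper, however, does not prove this lemma at all: its entire ``proof'' is a one-line citation to Lemma~3 of \cite{ghaderi2014opinion}. So what you have written is a genuinely different route in the sense that you supply a self-contained derivation (Gelfand upper bound, eigenvector lower bound, and the $-\log\rho=\Theta(1-\rho)$ identification), whereas the paper simply defers to an external reference. Your version is more informative and makes explicit the standard caveats (Jordan-block polynomial prefactors, the $\rho\uparrow 1$ regime in which the $\Theta$ identification is meaningful); the paper's version buys brevity and avoids re-deriving a known fact. One minor point worth smoothing: the spectral-radius eigenvector $\mb{v}$ may be complex while $\underline{\mb{e}}_0$ must be real, so in the lower bound you should take $\underline{\mb{e}}_0=\operatorname{Re}\mb{v}$ (or $\operatorname{Im}\mb{v}$) and track the resulting $\rho(\underline{\widehat{A}})^k$ envelope of the oscillatory mean; this changes nothing in the $\Theta$ conclusion.
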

\begin{proof}
	The proof follows from Lemma 3 in \cite{ghaderi2014opinion}.
\end{proof}
 Following Lemma~\ref{lem_eig_aug} and \ref{lem_rate}, the geometric decay rate of  \eqref{eq_err1} (for $\tau_{ij}=\overline{\tau}$) is  proportional to, 
 \begin{align} \label{eq_rate}
     1-\rho(P\otimes A^{\overline{\tau}+1} - K \overline{D}_C (P\otimes A^{\overline{\tau}+1}) )^{\frac{1}{\overline{\tau}+1}}
 \end{align}
For longer delays (i.e., greater $\overline{\tau},\overline{\tau}^*$ while $\overline{\tau}\leq \overline{\tau}^*$), the consensus rate in \eqref{eq_p}  and, in turn,  the decay rate of overall error \eqref{eq_err1} is slower. In fact, the convergence rate  is lower-bounded by \eqref{eq_rate} (for time-invariant delays).
For linear feedback systems, one can easily adjust the closed-loop eigenvalues and the convergence rate by  design of the desired feedback gain $K$. However, in the decentralized case,  $K$ is constrained to be block-diagonal. Such LMI-design with additional bound-constraint on closed-loop eigenvalues is a complex problem.   
  
\subsection{Discussions}
\begin{enumerate}
	\item Each sensor processes the a-priori estimates of its in-neighbors as they arrive. The messages are time-stamped and the sensor knows the time-step (and hence the delay) of the received information. The  proposed solution works \textit{for both heterogeneous and homogeneous delays}. 
	\item The observability results are independent of the specific choice of the consensus weights (e.g., lazy Metropolis \cite{bu2018accelerated} or simply $\frac{1}{|\mc{N}_i|+1}$ \cite{Themis_delay})  to satisfy stochasticity of $P$. This is due to  generic/structural observability results which hold for almost all choices of numerical entries of $P$ as long as its structure (the sensor network) is fixed/time-invariant \cite{commault2018classification,icassp2016}. Recall from structural observability that the weights for which the system is unobservable are of zero Lebesgue measure, i.e., if we choose the weights randomly, then the system will be almost surely observable. However, the structure of the consensus fusion matrix $P$ may affect  LMI-based gain design, the bound in Eq.~\eqref{eq_tau*}$, \rho(\underline{\widehat{A}})$, and the convergence rate \eqref{eq_rate}. Note from Lemma~\ref{lem_eig_kron} (in the Appendix) that $\rho(P \otimes A)= \rho(A)$ since $\rho(P)=1$ for any choice of row-stochastic $P$.  
	
	\item The cost-optimal design  of the sensor network structure and sensor placement \cite{spl18,pequito2014optimal} can be considered to reduce the communication-related and/or sensing-related costs. In general, the cost-optimal design  subject to strong-connectivity is NP-hard. However, considering bidirectional links among sensors, it has a solution with polynomial-order complexity $\mc{O}(N^2)$ \cite{spl18}. 
	\item In case of sensor failure, the concept of \textit{observational equivalence} in both \textit{centralized} \cite{commault2018classification} and \textit{distributed} \cite{icassp2016} scenarios can recover the loss of observability.
	\item To design a distributed estimator to tolerate time-delays bounded by $\overline{\tau}_1$, one can redesign the LMI gain matrix $K$ by replacing $\widehat{A}= P\otimes A^{\overline{\tau}_1+1} - K \overline{D}_C (P\otimes A^{\overline{\tau}_1+1})$ in \eqref{eq_succ}. Clearly, from Theorem~\ref{thm_tau*} and \eqref{eq_tau*}, such $K$ results in $\rho(\underline{\widehat{A}})<1$ for $\overline{\tau}\leq \overline{\tau}_1$ (simply replace $\overline{\tau}^*=\overline{\tau}_1$). Such LMI gain design for the delay-free closed-loop matrix $\widehat{A}$ of size $nN$ instead of the delayed  matrix $\underline{\widehat{A}}$ of size $nN(\overline{\tau}+1)$, significantly reduces the  complexity order  with no need of using the augmented matrix $\overline{P}$.
	\item In \cite[Theorem 1]{HOFBAUER},  it is claimed that if a consensus matrix $P$ is \textit{weakly diagonally dominant}, then the off-diagonal delays in $P$ are harmless for stability. However, in error dynamics \eqref{eq_err1}, the entries of $\underline{\widehat{A}}$ (and the weak/strong diagonal dominance of the closed-loop system) depends on the feedback gain $K$ and cannot be evaluated only based on the open-loop matrices $A$ and $P$. 
\end{enumerate}
%

 \section{Simulation} \label{sec_sim}
For MATLAB simulation, we consider a linear structurally-cyclic system of $n=6$ states with $4$ irreducible sub-systems and a group of $N=4$ sensors each taking one system output (from each irreducible block) satisfying Assumption~\ref{ass_AC}. The system is full-rank and unstable with $\rho(A)=1.04$. System and output noise are considered as $\mc{N}(0,0.004)$. The network of sensors $\mc{G}$ is considered as a simple directed self-damped cycle $1 \rightarrow 2 \rightarrow 3 \rightarrow 4 \rightarrow 1$, where  the system is not observable in the neighborhood of any sensor  (Remark~\ref{rem_ACj}). With this structure, the consensus weights are considered random while satisfying row-stochasticity of $P$. It is clear that such irreducible $P$ matrix satisfies Lemma~\ref{lem_irreducible} and, therefore, using the LMI strategy in \cite{usman_cdc:11,jstsp}, the block-diagonal gain matrix $K$ is designed such that $\rho(\widehat{A})=0.64<1$ and $\overline{\tau}^* =10$, implying stable error dynamics for any $\overline{\tau} \leq 10$ (sufficiency from Theorem~\ref{thm_tau*}). Next, considering both  heterogeneous delays (\textit{uniformly distributed} between $0$ and $\overline{\tau}$ for different links \cite{Themis_delay}) and  homogeneous delays (equal to $\overline{\tau}$ at all links) following Assumption~\ref{ass_tau},  the performance of the distributed estimation is analyzed. Fig.~\ref{fig_delay_mc} shows the Monte-Carlo simulation ($100$ trials) of mean-squared error (MSE) over the network with, (i) no time-delay, (ii) homogeneous/heterogeneous delays with $\overline{\tau}=3,8,19$  and (sufficiency) bounds in \eqref{eq_tau*} as  ${\rho(\underline{\widehat{A}})\leq 0.74, 0.94, 1.45}$. For fixed homogeneous delays $\overline{\tau}=19$, we have ${\rho(\underline{\widehat{A}})=1}$. From Fig.~\ref{fig_delay_mc} and Eq. \eqref{eq_rate}, longer delays decrease the MSE decay rate for $\overline{\tau}\leq \overline{\tau}^*$ (with $\overline{\tau}^*=10$ for this example), while for $\overline{\tau}> \overline{\tau}^*$ the error  \textit{may not necessarily} converge.

\begin{figure} 
	\centering
	\includegraphics[width=3.25in]{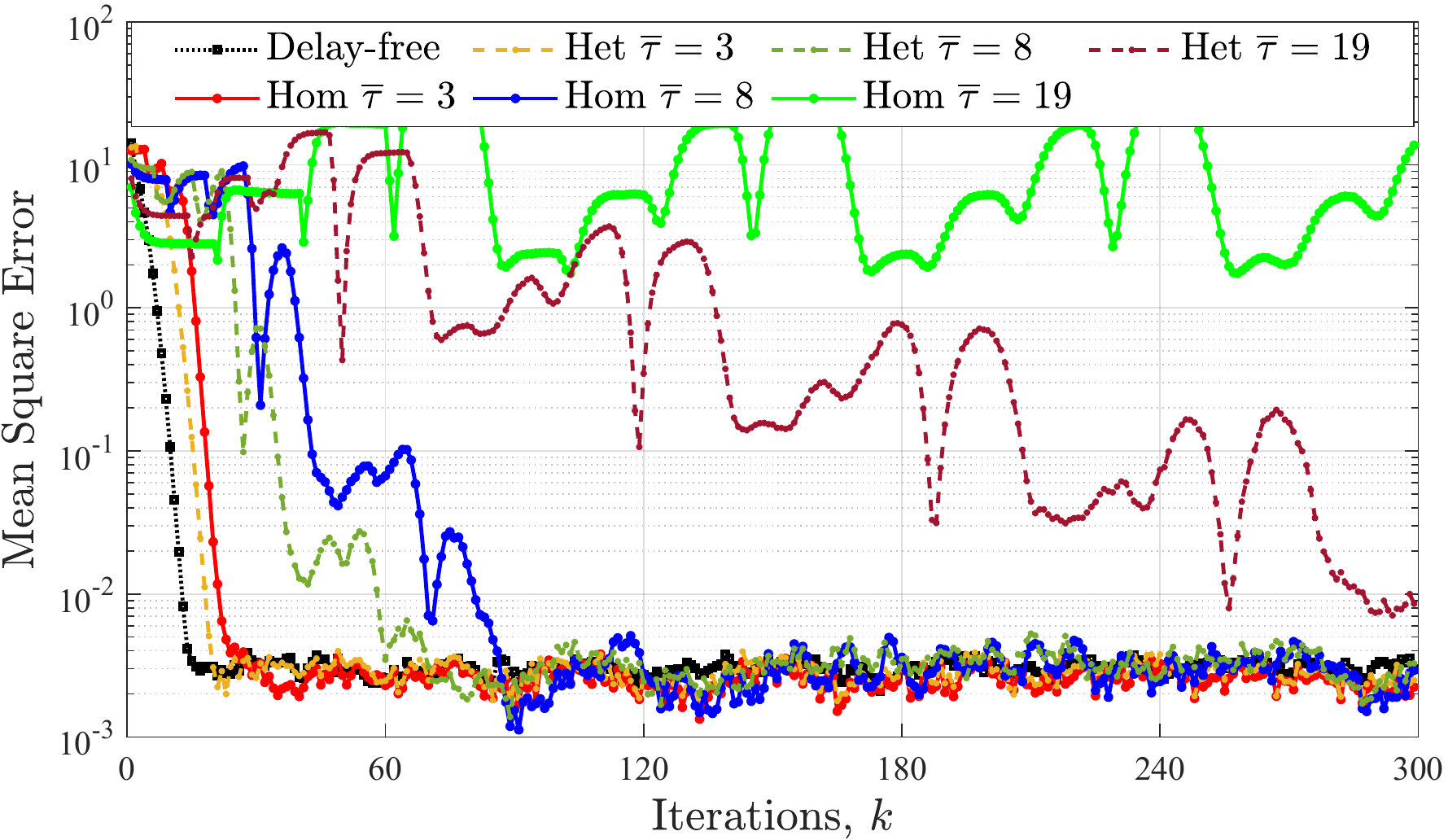}
	\caption{MSEs are bounded steady-state stable for $\overline{\tau}=0,3,8$. For $\overline{\tau}=19> \overline{\tau}^*=10$ the MSE is marginally unstable for the homogeneous case, while it is stable for the heterogeneous case. Note that  Theorem~\ref{thm_tau*} gives the \textit{sufficiency condition (on $\overline{\tau}$) for convergence not necessity}.
	} \label{fig_delay_mc}
\end{figure}

\section{Conclusions and Future Directions} \label{sec_conc}

This paper extends the recent literature on distributed estimation over  linear networks to time-delayed ones. We assume heterogeneous time-invariant  delays for the communication links. For a given bound on the delays, we provide a solution to design distributed estimators enabling all sensors to successfully track the system state over delayed networks. 

Part of ongoing research focuses on rank-deficient systems, which adds more complexity to the problem in terms of system outputs, network connectivity, and data-sharing  \cite{icassp13,pequito2015framework}. 
Other promising research directions are (i) detecting sensor faults/attacks \cite{deghat2019detection,tcns_fdi} along with considering latency on the distributed estimation networks, (ii) extension to time-varying delays as in \cite{Themis_delay}, and (iii)
pruning the network to improve the observability properties and convergence \cite{lcss2020,pirani_eigvalue}.


\section*{Appendix}
Some of the following lemmas can be found in standard matrix theory books, e.g., in \cite{bhatia2013matrix}.
\begin{lem} \label{lem_eig_kron}
	Consider two square matrices $P$ and $A$ of size $N$ and $n$, respectively, with the set of eigenvalues $\{\lambda_1,\hdots,\lambda_N\}$ and $\{\mu_1,\hdots,\mu_n\}$. Then, the set of eigenvalues of $P \otimes A$ is $\{\lambda_i \mu_j| i=1,\hdots,N,~j=1,\hdots,n \}$.
\end{lem}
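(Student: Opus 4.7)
The plan is to use Schur's triangularization theorem to reduce the problem to the easy case of triangular matrices, and then invoke the mixed-product property of the Kronecker product.

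First, I would record the special case in which $P$ and $A$ are diagonalizable, as a warm-up that already explains \emph{why} the products $\lambda_i \mu_j$ are the right answer: if $P v_i = \lambda_i v_i$ and $A w_j = \mu_j w_j$, then the mixed-product identity $(P \otimes A)(v_i \otimes w_j) = (P v_i) \otimes (A w_j) = \lambda_i \mu_j \, (v_i \otimes w_j)$ exhibits $v_i \otimes w_j$ as an eigenvector of $P \otimes A$ with eigenvalue $\lambda_i \mu_j$. Linear independence of the $v_i \otimes w_j$ accounts for all $Nn$ eigenvalues.

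For the general case I would apply Schur's theorem to write $P = U T_P U^{*}$ and $A = V T_A V^{*}$ with $U,V$ unitary and $T_P, T_A$ upper triangular, whose diagonal entries are the eigenvalues $\lambda_1,\dots,\lambda_N$ and $\mu_1,\dots,\mu_n$ respectively. By two applications of the mixed-product property,
\begin{align*}
P \otimes A \;=\; (U T_P U^{*}) \otimes (V T_A V^{*}) \;=\; (U \otimes V)\,(T_P \otimes T_A)\,(U \otimes V)^{*}.
\end{align*}
Since $U \otimes V$ is unitary, this is a similarity transformation, so the spectra of $P \otimes A$ and of $T_P \otimes T_A$ coincide with multiplicities. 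A direct inspection of the block structure of $T_P \otimes T_A$ shows that it is upper triangular, with diagonal entries precisely $\lambda_i \mu_j$ ranging over all $(i,j)$ pairs, and these diagonal entries are exactly the eigenvalues of a triangular matrix.

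There is essentially no obstacle here; the only subtle point is to be careful that one cannot simply argue via eigenvectors in the non-diagonalizable case, which is why the Schur reduction (or, equivalently, a Jordan-form argument) is needed. Everything else is a one-line invocation of the Kronecker mixed-product identity.
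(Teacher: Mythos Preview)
Your proposal is correct. The paper does not actually supply its own proof of this lemma; it merely states it and refers the reader to standard matrix-theory references (Bhatia), and your Schur-triangularization argument is precisely the standard textbook proof one finds there.
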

\begin{lem} \label{lem_polynom}
    Define the following $nN$-by-${nN}$ block matrix,
\small	\begin{align}\label{eq_aug_A}
		 \overline{A}_{n,i}= \left( 
		\begin{array}{ccccc}
			0_N & \hdots & A_{i} & \hdots & 0_N \\
			I_N &   0_N & \hdots &\hdots & 0_N\\
			0_N &  I_N & \ddots   &  \hdots   & 0_N  \\
			\vdots & \vdots & \ddots &  \ddots & \vdots \\
			0_N & 0_N &  \hdots & I_N & 0_N
		\end{array}	
		\right),
	\end{align} \normalsize
	where $N$-by-${N}$ matrix $A_{i}$ is located at the $i$th block (and the only non-zero block) in the first block-row of $\overline{A}_{n,i}$. Let $p(\lambda)$ and $q(\overline{\lambda})$  represent the characteristic polynomials of $A_{i}$ and $\overline{A}_{n,i}$, respectively. Then, $q(\overline{\lambda}) = \overline{\lambda}^{N(n-i)}p(\overline{\lambda}^i)$. 
\end{lem}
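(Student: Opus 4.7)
The plan is to compute the characteristic polynomial $q(\overline{\lambda}) = \det(\overline{\lambda} I_{nN} - \overline{A}_{n,i})$ by a block Schur-complement expansion that separates the first $N$ rows and columns from the rest. Assuming the generic case $i \geq 2$ (the case $i=1$ is an immediate block lower-triangular determinant since then $A_i$ already sits in the top-left block), I would write
\[
\overline{\lambda} I_{nN} - \overline{A}_{n,i} = \begin{pmatrix} M_{11} & M_{12} \\ M_{21} & M_{22} \end{pmatrix},
\]
with $M_{11} = \overline{\lambda} I_N$; with $M_{12}$ an $N \times (n-1)N$ block row whose only non-zero block is $-A_i$ in block-column $i-1$; with $M_{21}$ an $(n-1)N \times N$ block column whose only non-zero block is $-I_N$ in block-row $1$; and with $M_{22}$ the $(n-1)\times(n-1)$ block lower-bidiagonal matrix carrying $\overline{\lambda} I_N$ on the diagonal and $-I_N$ on the sub-diagonal.

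The structural observation is that $M_{22}$ is block lower-triangular, yielding $\det(M_{22}) = \overline{\lambda}^{(n-1)N}$, and moreover admitting an explicit inverse. Writing $M_{22} = \overline{\lambda} I - N$ with $N$ the nilpotent block-shift ($N^{n-1}=0$), the Neumann series produces $(M_{22}^{-1})_{jk} = \overline{\lambda}^{-(j-k+1)} I_N$ for $j \geq k$ and $0$ otherwise. Since $M_{12}$ and $M_{21}$ each have only one non-zero block, the triple product collapses to a single term:
\[
M_{12} M_{22}^{-1} M_{21} = (-A_i)\,(M_{22}^{-1})_{i-1,\,1}\,(-I_N) = A_i\, \overline{\lambda}^{-(i-1)}.
\]
Thus the Schur complement is $S = \overline{\lambda} I_N - A_i \overline{\lambda}^{-(i-1)} = \overline{\lambda}^{-(i-1)}(\overline{\lambda}^i I_N - A_i)$, with determinant $\det(S) = \overline{\lambda}^{-(i-1)N} p(\overline{\lambda}^i)$. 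The Schur determinant identity $q(\overline{\lambda}) = \det(M_{22})\det(S)$ then yields $q(\overline{\lambda}) = \overline{\lambda}^{(n-1)N}\cdot\overline{\lambda}^{-(i-1)N} p(\overline{\lambda}^i) = \overline{\lambda}^{N(n-i)} p(\overline{\lambda}^i)$, which is precisely the claim.

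There is no deep obstacle, but two small points must be handled to be rigorous. First, invertibility of $M_{22}$ requires $\overline{\lambda}\neq 0$, so the identity is initially derived only on a dense subset; since both sides are polynomials of degree $nN$ in $\overline{\lambda}$, the equality extends to $\overline{\lambda}=0$ by continuity. Second, careful bookkeeping on the index shift is essential: after stripping the first block-column, the block $-A_i$ moves from column $i$ of the original matrix to column $i-1$ of $M_{12}$, and this is exactly what delivers the exponent $-(i-1)$ (rather than $-i$) in the Schur complement. The heart of the proof is the explicit inversion of the lower-bidiagonal $M_{22}$; everything else is mechanical.
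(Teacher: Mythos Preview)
Your proof is correct and uses the same overall tool as the paper---a block Schur-complement expansion of $\det(\overline{\lambda}I_{nN}-\overline{A}_{n,i})$---but with a different partition point. The paper splits the matrix after the first $i-1$ block rows/columns, taking $E$ of size $N(i-1)$ and computing $q(\overline{\lambda})=|E|\,|H-GE^{-1}F|$; the Schur complement $H-GE^{-1}F$ is then $(n-i+1)N\times(n-i+1)N$ and still has block lower-bidiagonal structure, so its determinant factors as $\overline{\lambda}^{N(n-i)}|\overline{\lambda}I_N-\overline{\lambda}^{-(i-1)}A_i|$. You instead split after the first single block, invert the $(n-1)N\times(n-1)N$ lower-bidiagonal $M_{22}$ via the Neumann series, and obtain an $N\times N$ Schur complement directly. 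Your choice is arguably cleaner: the Schur complement is always $N\times N$ regardless of $i$, and the case $i=1$ falls out trivially without a separate argument, whereas the paper's partition degenerates when $i=1$ (the block $E$ is empty). Both approaches hinge on the same explicit inversion of a block lower-bidiagonal matrix, so the core computation is identical; only the bookkeeping differs. Your remarks on extending from $\overline{\lambda}\neq 0$ by polynomial identity and on the index shift of $-A_i$ are appropriate and make the argument more complete than the paper's version, which leaves those points implicit.
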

\begin{proof}
Consider,
\begin{align}
 		 \overline{\lambda} I_{nN} - \overline{A}_{n,i}= 
		 \left(\begin{array}{cc}
			E  & F\\
			G & H
		\end{array}\right),	
\end{align}
where block-matrix $E$ is $N(i-1)$-by-$N(i-1)$,  $F$ is $N(i-1)$-by-$N(n-i+1)$,  $G$ is $N(n-i+1)$-by-$N(i-1)$, and $H$ is $N(n-i+1)$-by-$N(n-i+1)$ defined as,	
\footnotesize
\begin{align}  
		 E &= \left( 
		\begin{array}{ccccc}
			\overline{\lambda} I_N & 0_N & \hdots & \hdots &  0_N \\
			-I_N &  \overline{\lambda} I_N  & \hdots &\hdots & 0_N\\
			0_N &  -I_N & \ddots   &  \hdots   & 0_N  \\
			\vdots & \vdots & \ddots &  \ddots & \vdots \\
			0_N & 0_N &  \hdots & -I_N & \overline{\lambda} I_N
		\end{array}	
		\right)\\
        F &= \left( 
		\begin{array}{cccc}
			-A_i & 0_N & \hdots  & 0_N \\
			0_N &   0_N & \hdots & 0_N\\
			\vdots & \vdots & \vdots &   \vdots \\
			0_N & 0_N &  \hdots  & 0_N
		\end{array}	
		\right) G =\left( 
		\begin{array}{cccc}
			0_N &  \hdots & 0_N & -I_N   \\
			  0_N & \hdots & 0_N & 0_N\\
			\vdots & \vdots & \vdots & \vdots \\
			0_N & 0_N &  \hdots &  0_N
		\end{array}
		\right) \\
		H &= \left( 
		\begin{array}{ccccc}
			\overline{\lambda} I_N & 0_N & \hdots & \hdots & 0_N \\
			-I_N &  \overline{\lambda} I_N & \hdots &\hdots & 0_N\\
			0_N &  -I_N & \ddots   &  \hdots   & 0_N  \\
			\vdots & \vdots & \ddots &  \ddots & \vdots \\
			0_N & 0_N &  \hdots & -I_N & \overline{\lambda} I_N
		\end{array}
		\right).
	\end{align} \normalsize
Recall that $p(\lambda)= |\lambda I_N - A_i|$, and 
\begin{align} \label{eq_gfeh}
		q(\overline{\lambda}) = |\overline{\lambda} I_{nN} - \overline{A}_{n,i}|= |E||H-GE^{-1}F|.
\end{align}
We have,
\small		\begin{align}
		 E^{-1} &=& \left( 
		\begin{array}{ccccc}
			\frac{I_N}{\overline{\lambda}} & 0_N & \hdots & \hdots & 0_N \\
			\frac{I_N}{\overline{\lambda}^2} &   \frac{I_N}{\overline{\lambda}} & \hdots &\hdots & 0_N\\
			\vdots & \vdots & \ddots &  \ddots & \vdots \\
			\frac{I_N}{\overline{\lambda}^{i-1}} & \frac{I_N}{\overline{\lambda}^{i-2}} &  \hdots & \frac{I_N}{\overline{\lambda}^{2}} & \frac{I_N}{\overline{\lambda}}
		\end{array}	
		\right), 
\end{align} \normalsize
and $H-GE^{-1}F$ is equal to,

\small \begin{align}
 \left( 
		\begin{array}{ccccc}
			\overline{\lambda} I_N-\frac{A_i}{\overline{\lambda}^{i-1}} & 0_N & \hdots & \hdots & 0_N \\
			-I_N &  \overline{\lambda} I_N & \hdots &\hdots & 0_N\\
			0_N &  -I_N & \ddots   &  \hdots   & 0_N  \\
			\vdots & \vdots & \ddots &  \ddots & \vdots \\
			0_N & 0_N &  \hdots & -I_N & \overline{\lambda} I_N
		\end{array}
		\right).
	\end{align} \normalsize
Then, $|E|= \overline{\lambda}^{N(i-1)}$, $ |H-GE^{-1}F|=\overline{\lambda}^{N(n-i)}|\overline{\lambda} I_N-\frac{A_i}{\overline{\lambda}^{i-1}}|$, and substituting these in \eqref{eq_gfeh},
\begin{align} 
		q(\overline{\lambda}) =  \overline{\lambda}^{N(n-i)}|\overline{\lambda}^i I_N - A_i| = \overline{\lambda}^{N(n-i)} p(\overline{\lambda}^i).
\end{align}
The proof is complete.
\end{proof}


\begin{lem} \label{lem_eig_aug}
    Given matrix $A$ with $\rho(A)<1$, we have $\rho(\overline{A})\leq \rho(A)^{\frac{1}{\overline{\tau}+1}}<1$ with  $\overline{A}$ as the augmented form of $A$ via Eq. \eqref{eq_aug_W}.
\end{lem}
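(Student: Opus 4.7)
The plan is to reduce the spectral radius of the block-companion matrix $\overline{A}$ to that of $A$ by combining a Schur-complement characterization of the eigenvalues (in the spirit of Lemma~\ref{lem_polynom}) with a Collatz--Wielandt bound from Perron--Frobenius theory.

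First I would generalize the block-companion calculation of Lemma~\ref{lem_polynom} from a single non-zero first-row block to all $\overline{\tau}+1$ blocks. Let $\overline{\lambda}\neq 0$ be an eigenvalue of $\overline{A}$ with eigenvector $z=(z_0;z_1;\ldots;z_{\overline{\tau}})$. The lower block-rows of $\overline{A}z=\overline{\lambda}z$ give $z_r=\overline{\lambda}^{-r}z_0$ for every $r$, and substituting into the top block-row yields the matrix-polynomial identity
\[
\overline{\lambda}^{\overline{\tau}+1}z_0 \;=\; \sum_{r=0}^{\overline{\tau}}\overline{\lambda}^{\overline{\tau}-r}A_r z_0,
\]
where $A_0,\ldots,A_{\overline{\tau}}$ are the first-row blocks of $\overline{A}$ in \eqref{eq_aug_W} and $A=\sum_r A_r$.

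Second, I would invoke Perron--Frobenius. Because \eqref{eq_aug_W} produces an entrywise non-negative $\overline{A}$ (whenever the decomposition of $A$ is non-negative), its spectral radius $\rho(\overline{A})$ is itself an eigenvalue with a non-negative Perron eigenvector $z_0\geq 0$. I split on the size of $\rho(\overline{A})$. If $\rho(\overline{A})\leq 1$, then $\rho(\overline{A})^{\overline{\tau}-r}\leq 1$ for every $r$, and the identity above yields the componentwise inequality
\[
\rho(\overline{A})^{\overline{\tau}+1}z_0 \;\leq\; \sum_{r=0}^{\overline{\tau}}A_r z_0 \;=\; Az_0,
\]
so the Collatz--Wielandt characterization of the Perron eigenvalue of the non-negative matrix $A$ gives $\rho(A)\geq\rho(\overline{A})^{\overline{\tau}+1}$, i.e.\ $\rho(\overline{A})\leq\rho(A)^{1/(\overline{\tau}+1)}$. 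If instead $\rho(\overline{A})>1$, factoring $\rho(\overline{A})^{\overline{\tau}}$ out of the sum gives $\rho(\overline{A})z_0\leq Az_0$, hence $\rho(A)\geq\rho(\overline{A})>1$, contradicting $\rho(A)<1$. The first case therefore exhausts the possibilities and yields the strict bound $\rho(\overline{A})\leq \rho(A)^{1/(\overline{\tau}+1)}<1$.

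The step I expect to be the main obstacle is the Perron--Frobenius reduction when the blocks $A_r$ are not entrywise non-negative: the matrix-polynomial identity still holds, but the componentwise inequality needed to invoke Collatz--Wielandt can fail. In that generality one would either have to dominate $\overline{A}$ by its entrywise modulus $|\overline{A}|$ (at the cost of replacing $\rho(A)$ with $\rho(|A|)$) or work directly through a carefully chosen operator norm that exploits the single-block bound $\overline{\lambda}^{\overline{\tau}+1}\in\mathrm{spec}(A_i)$ provided by Lemma~\ref{lem_polynom} termwise. For the paper's setting in Section~\ref{sec_pro}, however, \eqref{eq_aug_W} is applied to the non-negative stochastic building blocks $P_r$, so non-negativity is automatic and the argument above closes the proof.
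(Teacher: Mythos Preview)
Your approach is genuinely different from the paper's. The paper establishes only the single-block (homogeneous-delay) case by invoking Lemma~\ref{lem_polynom}: when all delays equal $\overline{\tau}$ one has $q(\overline{\lambda})=p(\overline{\lambda}^{\overline{\tau}+1})$ and hence $\rho(\overline{A})=\rho(A)^{1/(\overline{\tau}+1)}$; the heterogeneous case is then disposed of with the sentence ``this can be generalized for any choice of bounded time-delay.'' Your route---deriving the matrix-polynomial eigenvalue relation for the full block-companion form and closing it with a Collatz--Wielandt bound---actually \emph{proves} the heterogeneous inequality under non-negativity, so in that regime your argument is strictly more complete than the paper's, at the price of the extra Perron--Frobenius hypothesis.

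One correction to your last paragraph: the reassurance that non-negativity is automatic because ``\eqref{eq_aug_W} is applied to the non-negative stochastic building blocks $P_r$'' does not match how the lemma is used downstream. In Theorem~\ref{thm_tau*} the bound is invoked for the closed-loop matrix $\underline{\widehat{A}}$, whose first block-row entries are $(I-K\overline{D}_C)(P_r\otimes A^{r+1})$ and are in general sign-indefinite. Your Perron--Frobenius step therefore does not cover the paper's intended application, and the obstacle you flag is genuine rather than technical: one can split a Schur-stable sign-indefinite matrix so that the resulting companion matrix has spectral radius exceeding $1$, so the inequality fails without non-negativity. The paper's own proof shares this gap---it is hidden inside the unproved ``generalization''---so your caveat is well placed; just don't claim the paper's context rescues it.
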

\begin{proof}
	The characteristic polynomial of $\overline{A}$ can be defined based on Lemma~\ref{lem_polynom}. Let  $p(\lambda)$ and $q(\overline{\lambda})$ respectively  represent the characteristic polynomial of $A$ and $\overline{A}$. 
For $\tau_{ij} = \overline{\tau}$ for all $i,j$ and $A_{\overline{\tau}} = A$. Therefore, $ q(\overline{\lambda}) = p(\overline{\lambda}^{\overline{\tau}+1})$ and $	\rho(\overline{A}) = \rho(A)^{\frac{1}{\overline{\tau}+1}}<1.$
We know that the function $\rho(A)^{\frac{1}{{\tau}+1}}$ is an increasing function of $\tau$ (given $\rho(A)<1$), then, 
for $\tau_{ij} =r<\overline{\tau}$ for all $i,j$ and $A_{r} = A$, we have $\rho(\overline{A}) = \rho(A)^{\frac{1}{r+1}}< \rho(A)^{\frac{1}{\overline{\tau}+1}} <1.$
This can be generalized for any choice of bounded time-delay and associated augmented matrix 
in the form of \eqref{eq_aug_W}. This completes the proof.
\end{proof}

\bibliographystyle{IEEEbib}
\bibliography{bibliography}
\end{document}